\newtheorem{corollary}{Corollary}
\newtheorem{proposition}{Proposition}
\newtheorem{theorem}{Theorem}
\begin{document}

\preprint{APS/123-QED}

\title{Quantum Distribution Error Mitigation via the Circulant Structure of Pauli Noise}

\author{Alvin Gonzales$^1$ \orcidlink{0000-0003-1635-106X}}
\email{agonzales@anl.gov}
\affiliation{$^1$Mathematics and Computer Science Division, 
Argonne National Laboratory, Lemont, IL, USA}


\date{\today}

\begin{abstract}
This work introduces distribution error mitigation (DEM), which mitigates the error in the output distribution of a quantum circuit. We provide a rigorous theoretical foundation. If the composite noise affecting the circuit is a Pauli channel, the ideal output distribution and noisy distribution in the standard basis are related by a stochastic matrix. This system is described by a XOR convolution (the matrix is recursive 2 by 2 block circulant) between a noise vector and the ideal distribution. The noisy output distribution can be corrected to the ideal output distribution via a Fast Walsh-Hadamard Transform. We introduce a tomography method to approximate the noise vector, which requires sampling of only one logical circuit. The quantum overhead of DEM requires sampling of only two logical circuits. We provide techniques to scale the application of DEM efficiently. Accuracy bounds are provided. The approach is tested with quantum hardware executions consisting of 20-qubit and 30-qubit GHZ state preparation, 5-qubit Grover, 6-qubit and 10-qubit quantum phase estimation, and 10-qubit and 20-qubit Dicke state preparation circuits. DEM dramatically improves the accuracies of the output distributions for all demonstrations. For 30-qubit GHZ state preparation, a corrected distribution fidelity of 97.7\% is achieved from an initial raw fidelity of 23.2\%.
\end{abstract}

\maketitle
\makeatother

%


%



\section{Introduction}
Quantum error correction codes (QECC) \cite{nielsen2011quantumCompAndQuantInfo, shor_1996FaultTolQuantComp, gottesman1997stabilizercodesquantumerrorCorr} are the standard approach for achieving reliable quantum computing. In fault tolerant quantum error correction, we can suppress the error rates to arbitrarily low levels provided that the physical error rates are below the code threshold \cite{ Aharonov_2008FaultTolQuantCompWithConstErrRate}.  This process imposes a large overhead on quantum computing. While progress has been made \cite{Acharya2024_QECBelowTheSurfCodeThresh}, fault tolerant quantum devices with many logical qubits are still far from reality with some estimates projecting that thousands of physical qubits are required for each logical qubit \cite{Fowler_2012_SurfCodeTowardsPracLargeScaleQuantComp, Chamberland_2017OverheadAnalysisOfUniversalConcatQuantCodes, Campbell_2017_RoadsTowardsFaultTolUnivQuantComp}. In its place, a suite of quantum error mitigation techniques are commonly used \cite{Maciejewski_2020MitigOfReadoutNoiseInNearTermQuantDevices, Jattana_2020_generalErrorMitigForQuantCircs, Mcclean_2017HybridQuantClassHeirForMitigOfDecohAndDeterOfExciteStates, Li_2017EffVarQuantSimIncActErrorMinimZNE, Temme_2017ErrorMitigForShortDepthQuantCircs, Debroy_2020ExtendFlagGadgetsForLowOverCircVer,  Gonzales_2023QEMByPCS, Huggins_2021VirtDistillForQuantErrMitig}. Additionally, error mitigation will likely be important in achieving quantum advantage \cite{aharonov2025importanceerrormitigationquantum}.

Error mitigation techniques usually target correcting the estimate of the expectation value of an observable. Here, we investigate correcting the measured output distribution, although our approach can also be used to improve observable measurement outcomes.

This work presents  distribution error mitigation (DEM). Any unitary quantum circuit can be described by an ideal circuit followed by a Kraus error channel and measurement in the standard basis. Provided that the composite noise channel affecting the circuit is Pauli, the noisy distribution and the ideal distribution are related by a stochastic matrix $A$. This system is described by an exclusive OR (XOR) convolution between a noise vector $\vec{a}$ and the ideal probability vector $\vec{x}$. Thus, $A$ has a recursive 2$\times$2 block circulant structure \cite{Rezghi_2011DiagOfTensWithCircStruct} and is  described by any one of its columns. The analytical results presented here are different from known properties of Pauli channels which relate the coefficients of two Pauli channel representations with a Fourier Transform \cite{chen_2022QuantAdvForPauliChanEst}. Provided we know the noise vector $\vec{a}$, this relation allows us to correct the noisy distribution via a Fast Walsh-Hadamard Transform (FWHT). Thus, determining $\vec{a}$ is the key hurdle. We present a simple noise vector tomography that allows us to estimate $\vec{a}$ with a single logical circuit.  

Thus, DEM only requires execution of 2 logical circuits. Biasing to Pauli noise is performed with randomized compiling \cite{Wallman_2016NoiseTailForScalQCViaRandmizedCompiling}. We also provide accuracy bounds for DEM and algorithms that allow us to apply DEM to a large number of qubits. The DEM approach is tested with quantum hardware executions consisting of state preparation circuits for 20-qubit and 30-qubit GHZ states and 10-qubit and 20-qubit Dicke states with 1 excitation (Dicke 10-1 and Dicke 20-1 states are denoted as $\ket{D^{10}_1}$ and $\ket{D^{20}_1}$, respectively), 6-qubit and 10-qubit quantum phase estimation, and 5-qubit Grover search. DEM dramatically improves the fidelities of all demonstrations. For the 30-qubit GHZ state preparation circuit, the output distribution is corrected to 97.7\% fidelity from the initial raw fidelity of 23.2\%. The 5-qubit Grover search circuit consists of 582 CZ gates and DEM achieves a corrected fidelity of 74.9\% from an initial raw fidelity of 10.2\%. A summary of the fidelities and gate counts for the quantum hardware executions are provided in Tables~\ref{tab:fidelities} and \ref{tab:gate_counts}, respectively. Finally, we discuss open problems with DEM.

\section{Background}
Let $X, Y,$ and $Z$ denote the Pauli matrices. The Pauli group is defined as
\begin{align}
    \mathcal{P}=\{I,X,Y,Z\}^{\otimes n}\times\{\pm 1, \pm i\}.    
\end{align}
An important class of unitary operations are Clifford operations, which maps the Pauli group into itself \cite{nielsen2011quantumCompAndQuantInfo}.

Randomized compiling \cite{Wallman_2016NoiseTailForScalQCViaRandmizedCompiling} is a method that  biases gate errors towards Pauli channels via twirling \cite{Bennett_1996MixedStEntAndQEC, Cai_2019ConstructSmallerPauliTwirlingSetsForArbErrChann}.
Twirling is defined as 
\begin{align}
    \mathcal{E}'(\rho)=\frac{1}{\abs{T}}\sum_{i,(V\in T)}VE_iV^\dagger \rho VE_i^\dagger V^\dagger,
\end{align}
where $T$ is the Twirling set and the $E_i$s are Kraus operators of the original noise channel.

Next, given a payload circuit, a noise estimation circuit (NEC) is a constructed circuit with a similar structure \cite{Urbanek_2021MitigDepolNoiseOnQCWithNoiseEstimationCircs}. The noise estimation circuit is typically used to estimate the errors affecting a quantum circuit.  Finally, the Hellinger fidelity between two distributions $p=\{p_i\}$ and $q=\{q_i\}$ over outputs $i$ is defined as 
$F(p,q)=\left(\sum_i\sqrt{p_iq_i}\right)^2$.

The total-variation (TV) distance is an important distance measure between probability distributions. It is defined as 
\begin{align}
    D_{TV}(p,q)=\frac{1}{2}\norm{p,q}_1,
\end{align}
where $p$ and $q$ are probability vectors \cite{Maciejewski_2020MitigOfReadoutNoiseInNearTermQuantDevices}. This is related to the distance between POVMs $\textbf{M}$ and $\textbf{N}$, where $\textbf{M}$ and $\textbf{N}$ are block vectors with elements of the corresponding POVM operators.
\begin{align}
    D_{OP}(\textbf{M}, \textbf{N})=\text{max}_\rho D_{TV}(\textbf{p}^\textbf{M},\textbf{p}^\textbf{N}),
\end{align}
where we maximize over all quantum states and $\textbf{p}^{\textbf{M}/\textbf{N}}$ are the probability distributions corresponding to the state $\rho$ and POVM $\textbf{M}/\textbf{N}$.
The next two sections describe the theoretical setup and detailed derivations of important properties of we will rely on. Detailed proofs for Prop.~\ref{prop:stochastic_relation} and Prop.~\ref{prop:AtildeSymmetry} are provided in the Appendix.


\section{Theory}
\subsection{Quantum Circuit Error Assignment Matrix}

For an arbitrary noisy unitary quantum circuit, we can write the evolution as an ideal unitary channel followed by a composite noise channel. Assuming that the qubits are initialized in a product state with the environment, the evolution of the qubits from the initial state $\rho$ to the final state $\rho'$ is given by a completely positive map $\mathcal{K}(\rho)=\sum_iK_i\rho K_i^\dagger$. Thus,
\begin{align}
    \notag\rho'=\mathcal{K}(\rho)=\sum_i(K_iU^\dagger) U\rho U^\dagger (UK_i^\dagger)\\
    \label{eq:tilde_rho_in}=\mathcal{E}\circ \mathcal{U}(\rho)=\sum_iE_iU\rho U^\dagger E_i^\dagger=\sum_{i}E_i\tilde \rho E_i^\dagger,
\end{align}
where $U$ represents the ideal circuit, $\mathcal{E}(\omega)=\sum_iE_i\omega E_i^\dagger$ is the noise channel, $E_i=K_iU^\dagger$, and $\tilde\rho=\mathcal{U}(\rho)=U\rho U^\dagger$. The rotations corresponding to the measurement basis are included in $U$ and the measurement basis can be assumed to be in the standard basis. This description is general for completely positive  channels $\mathcal{K}$. Noise from the measurements can be composed into $\mathcal{E}$. Note that $\tilde\rho$ is the ideal (noise free) state generated by the quantum circuit. A quantum evolution can be equivalently described by a superoperator matrix $A$ \cite{Sudarshan_1961StochasticDynOfQuantMechSystems}, which for $n$ qubits is $4^n\times 4^n$.

A simpler scenario arises when $\mathcal{E}$ is a Pauli noise channel. In this scenario, Eq. \eqref{eq:tilde_rho_in} becomes
\begin{align}\label{eq:pauli-bias}
    \rho'=\sum_{i}\chi_{i}P_i\tilde \rho P_i,
\end{align}
where $\chi_i$ is real and $P_i$ is from the set of $+1$ elements of the Pauli group. In the following analytical calculations, we constrain $\mathcal{E}$ to Pauli.

\begin{proposition}\label{prop:stochastic_relation}
    Let the composite noise channel affecting the quantum circuit be Pauli. Then the noisy output distribution $\vec{z}$ and the ideal output distribution $\vec{x}$ in the standard basis are related by a stochastic matrix $A$, i.e., $A\vec{x}=\vec{z}$.
\end{proposition}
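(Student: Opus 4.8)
The plan is to compute the diagonal (computational-basis) matrix elements of $\rho'$ directly from Eq.~\eqref{eq:pauli-bias} and to show that they form a nonnegative, column-stochastic combination of the diagonal elements of $\tilde\rho$. Writing the ideal and noisy distributions as $x_j=\bra{j}\tilde\rho\ket{j}$ and $z_j=\bra{j}\rho'\ket{j}$ over computational basis states $\ket{j}$, I would begin from
\begin{align}
    z_j=\sum_i\chi_i\bra{j}P_i\tilde\rho P_i\ket{j},
\end{align}
so that the whole argument reduces to controlling the single term $\bra{j}P_i\tilde\rho P_i\ket{j}$.

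The key structural fact I would establish first is that every Hermitian Pauli $P_i$ permutes the computational basis up to a phase, $P_i\ket{j}=c_{i,j}\ket{j\oplus s_i}$, where $s_i$ is the bit string marking the tensor factors on which $P_i$ acts as $X$ or $Y$ (the flip support) and $c_{i,j}$ is a unit-modulus phase set by the $Y$ and $Z$ factors. This follows qubit-by-qubit from the action of $I,X,Y,Z$ on $\ket{0},\ket{1}$ and tensors multiplicatively. The crucial consequence is that applying $P_i$ on both sides of $\tilde\rho$ cancels the phase,
\begin{align}
    \bra{j}P_i\tilde\rho P_i\ket{j}=\abs{c_{i,j}}^2\bra{j\oplus s_i}\tilde\rho\ket{j\oplus s_i}=x_{j\oplus s_i},
\end{align}
so a diagonal element of $\tilde\rho$ maps to a diagonal element of $\rho'$ with no leakage from off-diagonal coherences. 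Substituting back gives $z_j=\sum_i\chi_i\,x_{j\oplus s_i}$, which I would package as $\tilde A\vec{x}=\vec{z}$ with $\tilde A_{jk}=\sum_{i:\,s_i=j\oplus k}\chi_i$.

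It then remains to verify that $\tilde A$ is stochastic. Nonnegativity of the entries follows because complete positivity of the Pauli channel forces $\chi_i\ge 0$. For the column sums, I would fix $k$ and sum over $j$: since each $i$ contributes to $\tilde A_{jk}$ only for the unique index $j=k\oplus s_i$, every $\chi_i$ is counted exactly once, so $\sum_j\tilde A_{jk}=\sum_i\chi_i=1$ by trace preservation. Hence $\tilde A$ is column-stochastic and carries the probability vector $\vec{x}$ to $\vec{z}$.

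I expect the main obstacle to be the phase-cancellation step. In the general, non-Pauli setting of Eq.~\eqref{eq:tilde_rho_in} the off-diagonal coherences of $\tilde\rho$ generically contribute to the diagonal of $\rho'$, so the clean diagonal-to-diagonal mapping is precisely the feature that the Pauli structure supplies and must be pinned down with care. The bookkeeping of the flip strings $s_i$ and phases $c_{i,j}$ is where one must be precise, though the modulus-one property makes the cancellation transparent once it is stated.
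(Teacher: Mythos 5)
Your proof is correct, and it reaches the paper's conclusion by a more hands-on route. The paper vectorizes the density matrix, writes the channel as $A=\sum_i\chi_i P_i^T\otimes P_i$ acting on $\vecc{\tilde\rho}$, and argues that only the diagonal block $\tilde A=\sum_{ir}a_{iirr}\op{ii}{rr}$ survives a standard-basis measurement; you instead compute the diagonal matrix elements $\bra{j}P_i\tilde\rho P_i\ket{j}$ directly. The underlying mechanism is identical --- Pauli strings permute the computational basis up to a phase, so diagonal entries map to diagonal entries with no contamination from coherences (your sandwich $\bra{j}P_i\tilde\rho P_i\ket{j}=\abs{c_{i,j}}^2\,x_{j\oplus s_i}$ makes this exact, using $\bra{j}P_i=\overline{c_{i,j}}\bra{j\oplus s_i}$ by Hermiticity). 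But your packaging buys two things the paper's appendix leaves implicit. First, you actually verify stochasticity ($\chi_i\ge 0$ from complete positivity, and $\sum_j\tilde A_{jk}=\sum_i\chi_i=1$ from trace preservation, since each $i$ hits exactly one row per column), whereas the paper's proof only establishes the linear relation $\tilde A\vec{x}=\vec{z}$ and asserts stochasticity in the statement. Second, your closed form $\tilde A_{jk}=\sum_{i:\,s_i=j\oplus k}\chi_i$ shows the entries depend only on $j\oplus k$; that single observation already contains Prop.~\ref{prop:AtildeSymmetry} and the recursive $2\times 2$ block circulant structure of Theorem~\ref{thm:circulant_prop} (a matrix constant on XOR-classes is precisely one diagonalized by $H^{\otimes n}$), which the paper derives separately. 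No gaps; the only step requiring care is the phase bookkeeping you already flagged.
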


A detailed proof is provided in the Appendix. A similar result was proven for the more restricted case of measurement errors in Ref.~\cite{Geller_2020_rigorousMeasErrCorrec}. $A$ has $2^n\times 2^n$ elements and consists of only the relevant terms that affect the distribution in the standard basis for a Pauli a noise channel.
Next, we describe symmetry properties of $A$. 
\begin{proposition}\label{prop:AtildeSymmetry}
    The elements of $A$ are related by
    \begin{align}\label{eq:tilde_a_symmetry}
        \bra{j}A P_x\ket{k}
    = \bra{j}P_xA\ket{k} \quad \forall \ket{j},\ket{k}, P_x
    \end{align}
    where $P_x$ is a Pauli X string (e.g., $X\otimes I$) and $\ket{j}$ and $\ket{k}$ are standard basis elements.
\end{proposition}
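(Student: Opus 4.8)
The plan is to first extract an explicit formula for the matrix elements of $\tilde A$, and then observe that the claimed symmetry reduces to the elementary fact that any two Pauli strings commute up to a sign. Starting from the noisy state in Eq.~\eqref{eq:pauli-bias}, the noisy probabilities are $z_j=\bra{j}\rho'\ket{j}=\sum_i\chi_i\bra{j}P_i\tilde\rho P_i\ket{j}$. Since each $P_i$ is a Hermitian Pauli string, $P_i\ket{j}$ is a \emph{single} computational basis state up to a phase, so $\bra{j}P_i\tilde\rho P_i\ket{j}$ picks out only a diagonal entry of $\tilde\rho$; that is, only the ideal probabilities $x_k=\bra{k}\tilde\rho\ket{k}$ enter. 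Collecting terms and comparing with $\vec z=\tilde A\vec x$ from Prop.~\ref{prop:stochastic_relation} would yield the compact expression
\begin{align}\label{eq:Aelem}
    \bra{j}\tilde A\ket{k}=\sum_i\chi_i\,\abs{\bra{j}P_i\ket{k}}^2 .
\end{align}
This is the key object: the phases attached to $P_i\ket{j}$ drop out because only the modulus squared survives.

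Next I would rewrite both sides of Eq.~\eqref{eq:tilde_a_symmetry} using Eq.~\eqref{eq:Aelem}. Because $P_x$ is a Pauli $X$ string, it acts on the standard basis as a pure bit flip with no phase, so $P_x\ket{k}$ and $\bra{j}P_x$ are again single basis states. Inserting $P_x$ into Eq.~\eqref{eq:Aelem} then gives
\begin{align}
    \bra{j}\tilde A P_x\ket{k}&=\sum_i\chi_i\,\abs{\bra{j}P_iP_x\ket{k}}^2 ,\\
    \bra{j}P_x\tilde A\ket{k}&=\sum_i\chi_i\,\abs{\bra{j}P_xP_i\ket{k}}^2 ,
\end{align}
where the first line absorbs $P_x$ via $\bra{j}P_i\,(P_x\ket{k})$ and the second via $(\bra{j}P_x)\,P_i\ket{k}$.

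The final step is the heart of the argument: any two Pauli strings either commute or anticommute, so $P_xP_i=\pm P_iP_x$. The sign is immaterial under the modulus squared, hence $\abs{\bra{j}P_iP_x\ket{k}}^2=\abs{\bra{j}P_xP_i\ket{k}}^2$ term by term, and the two sums coincide. This proves Eq.~\eqref{eq:tilde_a_symmetry}.

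I expect the only genuine obstacle to be establishing Eq.~\eqref{eq:Aelem} carefully, specifically verifying that the off-diagonal coherences of $\tilde\rho$ never contribute and that the phases of $P_i\ket{j}$ cancel. Once that formula is in hand (it is essentially the content of Prop.~\ref{prop:stochastic_relation}), the symmetry is immediate. Structurally, the result says that $\tilde A$ commutes with every bit-flip permutation, equivalently that $\bra{j}\tilde A\ket{k}$ depends only on $j\oplus k$; this is precisely the XOR-invariance that underlies the recursive $2\times2$ block circulant form.
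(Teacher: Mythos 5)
Your proof is correct and rests on the same key fact as the paper's: $P_xP_i=\pm P_iP_x$ with the sign cancelling because the Pauli appears on both sides of $\tilde\rho$ (the paper phrases this as ``Pauli channels commute'' and works at the level of the vectorized superoperator $A$, deducing $(P_x\otimes P_x)A=A(P_x\otimes P_x)$ and then restricting to the diagonal). Your explicit matrix-element formula $\bra{j}\tilde A\ket{k}=\sum_i\chi_i\abs{\bra{j}P_i\ket{k}}^2$ is a valid concretization of that same argument, so this is essentially the paper's route.
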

A detailed proof is provided in the Appendix. The proof follows from the fact that Pauli channels commute.

Proposition \ref{prop:AtildeSymmetry}, implies that $A$ is completely characterized by \textit{any one} of its columns, since all of the columns of $A$ have the same elements, but reordered. Therefore, we only have to characterize one column of $A$ to characterize all of $A$. Some important properties of $A$ immediately follow from Prop.~\ref{prop:AtildeSymmetry}.
\begin{corollary}
    $A$ is symmetric.
\end{corollary}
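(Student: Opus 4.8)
The plan is to specialize the symmetry relation of Prop.~\ref{prop:AtildeSymmetry} to the diagonal case $\ket{k}=\ket{j}$ and then exploit the permutation action of Pauli X strings on the computational basis. First I would record two elementary facts about an X string $P_x$: it is Hermitian, and acting on a basis state it simply flips the bits indicated by $x$, so that $P_x\ket{k}=\ket{k\oplus x}$ and correspondingly $\bra{j}P_x=\bra{j\oplus x}$, where $\oplus$ denotes bitwise XOR. Crucially, as $x$ ranges over all bit strings the operators $P_x$ realize \emph{every} map of the form $k\mapsto k\oplus x$; hence for any ordered pair of basis states $(\ket{p},\ket{q})$ there is an X string with $P_x\ket{p}=\ket{q}$, namely the one with $x=p\oplus q$.

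With these facts in hand, I would set $\ket{k}=\ket{j}$ in Eq.~\eqref{eq:tilde_a_symmetry}, giving $\bra{j}\tilde A P_x\ket{j}=\bra{j}P_x\tilde A\ket{j}$. Substituting the two bit-flip identities turns the left side into $\bra{j}\tilde A\ket{j\oplus x}$ and the right side into $\bra{j\oplus x}\tilde A\ket{j}$, so that
\begin{align}
    \bra{j}\tilde A\ket{j\oplus x}=\bra{j\oplus x}\tilde A\ket{j}\quad\forall\,\ket{j},P_x.
\end{align}
Finally, to establish $\bra{p}\tilde A\ket{q}=\bra{q}\tilde A\ket{p}$ for an arbitrary pair $(p,q)$, I would choose $j=p$ and pick the X string with $x=p\oplus q$, so that $j\oplus x=q$; the displayed relation then reads $\bra{p}\tilde A\ket{q}=\bra{q}\tilde A\ket{p}$, which is exactly symmetry of $\tilde A$.

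The argument is short, so the only real obstacle is making sure that restricting to X strings (rather than general Pauli strings) is both legitimate and sufficient: legitimate because Prop.~\ref{prop:AtildeSymmetry} is stated for X strings, and sufficient because those strings already generate all the basis permutations $k\mapsto k\oplus x$ needed to connect an arbitrary index $p$ to an arbitrary index $q$. I would double-check the Hermiticity bookkeeping in $\bra{j}P_x=\bra{j\oplus x}$, since that is the one place where a stray conjugation or transpose could silently break the symmetry conclusion.
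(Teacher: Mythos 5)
Your proof is correct and follows essentially the same route as the paper's: set $\ket{j}=\ket{k}$ in Prop.~\ref{prop:AtildeSymmetry} and use the fact that the X strings permute the computational basis, connecting any pair $(\ket{p},\ket{q})$ via $x=p\oplus q$. You merely make explicit the transitivity of the X-string action and the Hermiticity bookkeeping that the paper leaves implicit.
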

\begin{proof}
    This immediately follows from Prop.~\ref{prop:AtildeSymmetry} by setting $\ket{j}=\ket{k}$. Then $\bra{k}A\ket{k'}=\bra{k'}A\ket{k}$ $\forall \ket{k'},\ket{k}$, where $\ket{k'}=P_x\ket{k}.$
\end{proof}

\begin{corollary}
    $A$ is doubly stochastic (i.e., each row and each column sums to 1).
\end{corollary}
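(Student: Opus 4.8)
The plan is to derive double stochasticity by simply combining the two facts already in hand: that $\tilde A$ is stochastic (Prop.~\ref{prop:stochastic_relation}) and that it is symmetric (the preceding corollary). The observation driving the proof is that stochasticity is inherently a one-sided statement — the relation $\tilde A\vec{x}=\vec{z}$ forces only the columns of $\tilde A$ to sum to one — whereas symmetry lets us transfer this constraint to the rows for free.

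Concretely, I would first pin down the column condition directly from Prop.~\ref{prop:stochastic_relation}. Taking $\vec{x}=\ket{k}$, which is a legitimate probability vector, the image $\tilde A\ket{k}=\vec{z}$ must again be a probability distribution; summing its entries gives $\sum_j\bra{j}\tilde A\ket{k}=1$ for every $k$, so each column of $\tilde A$ sums to one, and the same proposition guarantees the entries are nonnegative. Second, I would invoke symmetry, $\tilde A=\tilde A^{\mathsf T}$, so that the $i$-th row sum $\sum_j\bra{i}\tilde A\ket{j}$ equals the $i$-th column sum $\sum_j\bra{j}\tilde A\ket{i}$, which we have just shown equals one. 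Hence every row and every column sums to one, and with nonnegative entries $\tilde A$ is doubly stochastic.

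There is essentially no hard step here; the real content was already packed into Prop.~\ref{prop:AtildeSymmetry} and the symmetry corollary that follows from it. The only points requiring care are bookkeeping ones: fixing the convention that ``stochastic'' refers to column sums (consistent with $\tilde A$ mapping distributions to distributions) rather than row sums, and noting that nonnegativity of the entries is inherited from Prop.~\ref{prop:stochastic_relation} and need not be re-derived. Once symmetry is available, converting one-sided stochasticity into two-sided stochasticity is immediate.
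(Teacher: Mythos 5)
Your proof is correct, but it takes a different route from the paper's. The paper does not invoke the symmetry corollary at all: it argues directly from Prop.~\ref{prop:AtildeSymmetry} that column $k$ of $\tilde A$ is the permutation $P_x\vec{a}$ of the first column (with $P_x$ the X-string sending $\ket{0}$ to $\ket{k}$), and that these permutations place each element of $\vec{a}$ in a distinct row position across the columns, so every row also contains each entry of $\vec{a}$ exactly once and hence sums to $1$. You instead combine column-stochasticity from Prop.~\ref{prop:stochastic_relation} with $\tilde A=\tilde A^{\mathsf T}$ from the preceding corollary, which transfers the column-sum condition to the rows in one line. Your argument is shorter and leans on a standard fact (symmetric $+$ stochastic $\Rightarrow$ doubly stochastic), whereas the paper's argument exposes more structure --- it shows the rows, like the columns, are permutations of a single vector $\vec{a}$, which is the Latin-square-like property underlying the circulant structure proved in Theorem~\ref{thm:circulant_prop}. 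One small point of care in your version: applying Prop.~\ref{prop:stochastic_relation} with $\vec{x}=\ket{k}$ implicitly uses the fact that $\tilde A$ is determined by the noise channel alone (as the appendix proof shows), not by the particular circuit, so its columns are themselves output distributions; this is true but worth stating, since the proposition as phrased relates the distributions of one fixed circuit.
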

\begin{proof}
   The first column $\vec{a}$ of $A$ sums to 1. From Prop.~\ref{prop:AtildeSymmetry}, each element of $\vec{a}$ appears once in each column. Additionally, Prop.~\ref{prop:AtildeSymmetry} implies that the row position of each element is different for each column.
\end{proof}

\begin{theorem}
    The system $\vec{z}=A \vec{x}$ is described by the XOR convolution
    \begin{align}
        z_i=\sum_j a_{i\oplus j}x_j,
    \end{align}
    where $\vec{a}$ is the first column vector of $A$ and $\oplus$ is the XOR operation.
\end{theorem}
\begin{proof}
    In index notation, we have $z_i=\sum_jA_{ij}x_j$. Using $i,j$ in the standard basis, we have
    \begin{align}\label{eq:systemInIndexNota}
        A_{ij}=\bra{i}A\ket{j}=\bra{i}A P_x\ket{0},
    \end{align}
    where $\ket{j}=P_x\ket{0}$. Notice that $\bra{i}P_x=\bra{i\oplus j}$. Then from Prop. \ref{prop:AtildeSymmetry},
    \begin{align}\label{eq:tildeAInIndexNota}
        A_{ij}=\bra{i}P_xA\ket{0}=\bra{i\oplus j}\vec{a}=a_{i\oplus j}.
    \end{align}
    Substituting Eq.~\eqref{eq:tildeAInIndexNota} into Eq.~\eqref{eq:systemInIndexNota}, we get
    \begin{align}
        z_i=\sum_ja_{i\oplus j}x_j.
    \end{align}
\end{proof}

We refer to $\vec{a}$ as the noise vector. From convolution theory, this implies that 
\begin{corollary}
$A$ is recursive 2$\times$2 block circulant and is diagonalized by the Walsh-Hadamard Transform.
\end{corollary}
Note that a $2\times 2$ circulant matrix has the form $M=\begin{bmatrix}
    b & c\\
    c & b
\end{bmatrix}.$

\subsection{Inversion of $A$}
In general $A$ is not guaranteed to be invertible.  
Due to the recursive 2$\times$2 block circulant structure of $A$,  we can easily construct its inverse or pseudo-inverse \cite{Rezghi_2011DiagOfTensWithCircStruct}. $A$ is diagonalized by the tensor product of $2\times 2$ Fourier Transform matrices, i.e, $H^{\otimes n}$ which is also known as the Hadamard transform. The system of equations $A\vec{x}=\vec{z}$, where $\vec{x}$ is the ideal counts and $\vec{z}$ is the noisy distribution of the raw payload circuit, is solved by
\begin{align}\label{eq:inversion_fwht}
    \vec{x}=\text{IFWHT}\left(\text{FWHT}(\vec{z})./\text{FWHT}(\vec{a})\right),
\end{align}
where $\vec a$ is the first column of $A$, $./$ is element wise division, FWHT is the Fast Walsh-Hadamard transform, and IFWHT is the inverse FWHT \cite{Rezghi_2011DiagOfTensWithCircStruct}. For elements where a division by zero occurs, we can assign an element output of zero, which is equivalent to performing the pseudo-inverse. Eq.~\eqref{eq:inversion_fwht} is much faster than performing matrix inversion. The number of operations required is $O(m\log(m))$, where $m$ is the dimension of the array. In contrast, matrix inversion scales $O(m^3)$.

\section{Distribution Error Mitigation}
At a high level distribution error mitigation (DEM) works by following the previous discussion by 1.) approximating the noise vector $\vec{a}$. 2.) correcting the noisy distribution $\vec{z}$ by leveraging the circulant structure. DEM makes several assumptions that will be explicitly stated along the way and we will introduce methods that bring us closer to these assumptions. Additionally, improvements to DEM will be introduced.

Notice that $\vec{z}$ is determined by sampling the payload circuit. Assuming we know $\vec a$, we can correct the distribution with Eq.~\eqref{eq:inversion_fwht} (scalability will be discussed later). Thus, the main difficulty in applying DEM is determining $\vec a$.  

\subsection{Noise Vector Tomography with 1 Logical Circuit}
We first introduce a simple characterization scheme for $\vec{a}$ using a noise estimation circuit (NEC). We refer to this method as vanilla NEC. Note that the characterization of $\vec{a}$ is not process tomography of the full Pauli channel, since $Z$ errors have no effect on the standard distribution. \textbf{Assumption 1: the composite noise channel (see Eq.~\eqref{eq:tilde_rho_in}) affecting the payload circuit is Pauli}. We can get closer to this assumption via randomized compiling. To perform characterization, we first transpile the circuit so that it complies with the gate set and connectivity of the quantum hardware device. Call this transpiled circuit the payload circuit. Without a loss of generality, let the gate set be given by CZ, SX, RZ, and X. Notice that the only gate capable of creating a superposition is SX
\begin{align}
    \text{SX}=\frac{1}{2}\begin{bmatrix}
    1 &-i\\
    -i &1
    \end{bmatrix}.
\end{align}

We construct a NEC from the payload circuit by replacing SX gates with X gates while keeping everything else the same. An example of the NEC construction process is provided in Figs.~\ref{fig:payload_circ_ex} and \ref{fig:nec_ex}.
\begin{figure}
    \centering
    \subfloat[\label{fig:payload_circ_ex} Payload circuit.]{\includegraphics[width=0.45\textwidth]{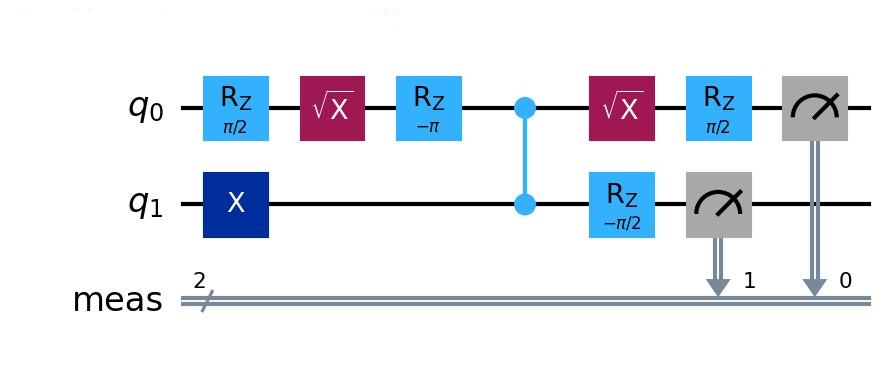}}

    \subfloat[\label{fig:nec_ex} Noise estimation circuit.]{\includegraphics[width=0.45\textwidth]{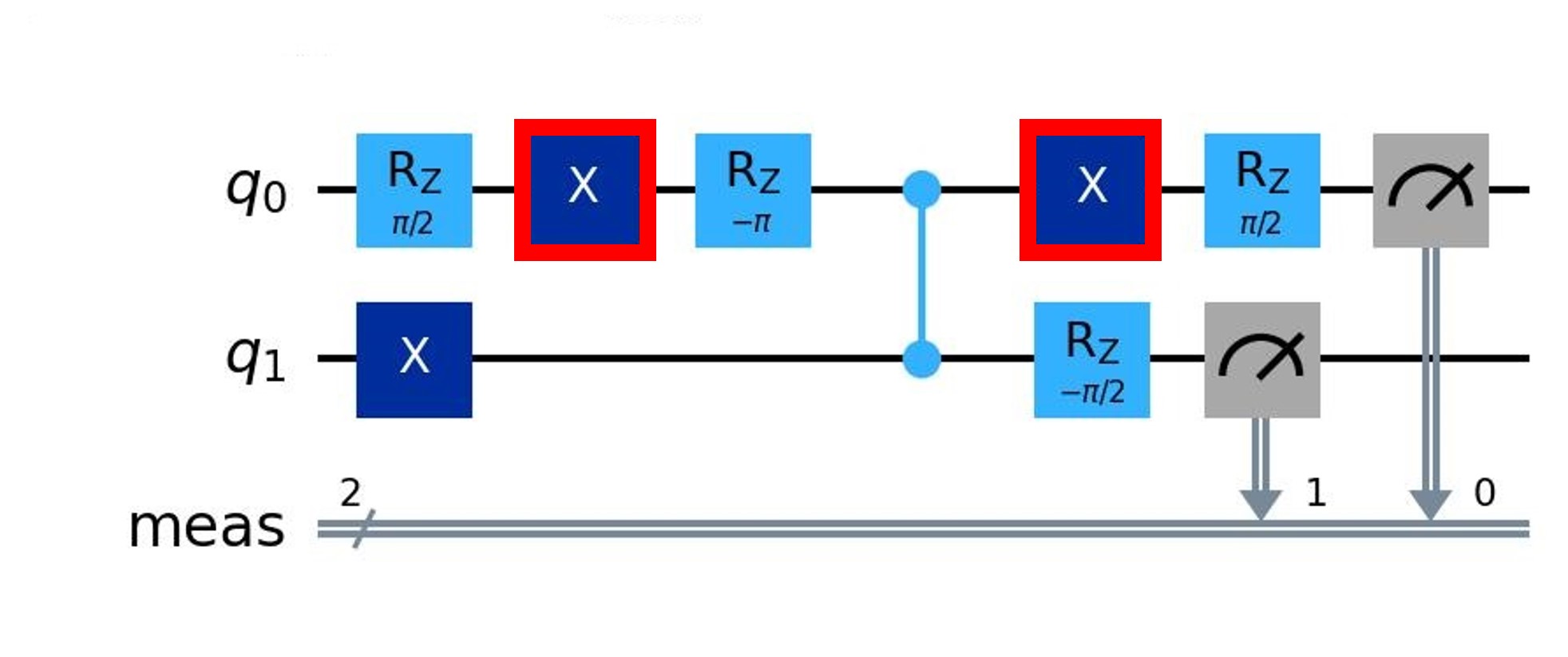}}
    \caption{The NEC is constructed by replacing SX gates in the payload circuit with X gates (outlined in red). The NEC does not generate a superposition by construction and its ideal output when evolving from the ground state is a standard basis state $\ket{k}$ that we can efficiently determine classically.}
    \label{fig:nec_construction_ex}
\end{figure}

Next, we perform randomized compiling to bias the noise towards Pauli. Since the noise estimation circuit has a very similar circuit structure to the payload circuit, we make the assumption that their associated error channels are very close together. \textbf{Assumption 2: the noise vector $\vec{a}$ affecting the NEC is close to the one affecting the payload.} This assumption is supported, by previous investigations on noise estimation circuits \cite{Urbanek_2021MitigDepolNoiseOnQCWithNoiseEstimationCircs} and this paper's quantum hardware results. Limitations and possible improvements of the described NEC for characterization of $A$ are discussed in the next section and in Sec.~\ref{sec:limsAndFutureDir}. 

Since the NEC does not create superposition when evolving from the ground state, it is easy to determine its ideal output classically, which corresponds to the column $\ket{k}$ of $A$ we are characterizing. Let the elements of column $\ket{k}$ of $A$ be denoted by $\vec{b}$. $\vec{b}$ is generated by sampling the NEC. Then, the first column vector $\vec{a}$ can be determined by application of Eq.~\eqref{eq:tilde_a_symmetry}.

\section{Scaling DEM}
Next, we describe scalable methods for solving for $\vec{x}$. We emphasize here that scalable means that DEM  scales efficiently. However, it does not guarantee highly accurate solutions. While transforms are efficient, Eq.~\eqref{eq:inversion_fwht} is only practical for moderate Hilbert space sizes, since the full $2^n$ vectors need to be constructed. For astronomically large Hilbert space sizes (e.g., $2^{100}$), the direct solution using the transforms in Eq.~\eqref{eq:inversion_fwht} is impractical due to time complexity and required storage space of the vectors. There are many techniques from the field of compressed sensing that can be used to address this scenario. We discuss two scalable approaches.

\subsection{Compression}
Assume that the support of $\vec{x}$, $S_x$, is a subset of the support of $\vec{z}$, $S_z$. This assumes that the bitstrings that appear in the ideal distribution also appear in the noisy distribution. Let $\abs{S_z}=K$. Since $S_x\subseteq S_z$, we can compress the problem to 
\begin{align}
    A_S\vec{x}_S=\vec{z}_S,
\end{align}
where $A_S$ is a $K\times K$ submatrix of $A$ over the indices $S_z$, and $\vec{z}_S$ and $\vec{x}_S$ consists of the elements of their corresponding full vectors over the indices $S_z$. This is a much more tractable problem to solve provided that $K\ll 2^n$. If $K$ is small enough, we can fully construct $A_S$ and solve for $\vec{x}_S$. Otherwise, we can use a matrix free iterative solver.

\begin{proposition}\label{prop:compression}
    Let $S_x\subseteq S_z$ and $A$ be invertible. The solution is $\vec{x}=A^{-1}\vec{z}$. Then $\vec{x}_S$ is the only probability vector (unique probability vector solution) such that $A_S\vec{x}_S=\vec{z}_S$. 
\end{proposition}
\begin{proof}
    Assume that there exists a probability vector $\vec{y}_S$ such that $\vec{y}_S\neq \vec{x}_S$ and $A_S\vec{y}_S=A_S\vec{x}_S=\vec{z}_S$. Let $\vec{w}=A\vec{y}$. Note that $A$ is doubly stochastic and hence preserves probability vectors. Thus, $\vec{w}$ must be a probability vector. Since $\vec{w}$ already sums to one over $S_z$, $\vec{w}$ must be zero outside of $S_z$. Thus $\vec{w}=\vec{z}$ and $\vec{y}=\vec{x}$.
\end{proof}
Thus, the compressed problem provides the solution to the full problem provided that $S_x\subseteq S_z$ and $A$ is invertible.

Let $\hat{}$ denote estimation of the variable. In practice, $K_{\hat{z}}=\abs{S_{\hat{z}}}$ is upper bounded by the number of shots $M$ used for sampling, where generally $K_{\hat{z}}\leq M\ll 2^n$.  Also, elements of $\hat{A}_S$ can be efficiently extracted from $\hat{a}$ using the symmetry rule Eq.~\eqref{eq:tilde_a_symmetry}.  In practice, a limitation of compression is shot noise. This can often result in an $\hat{x}$ that is not a probability vector and hence, Prop.~\ref{prop:compression} does not apply. However, in practice, compression generally yields improved distributions.

\subsection{Binning Heuristic}

Note that although $A_S$ is a submatrix of the block circulant matrix $A$, $A_S$ is in general not guaranteed to have a block circulant structure. However, we can assume a XOR convolution structure anyway and use the FWHT. This is the approach used by the binning heuristic. Let $S$ be the support of $\vec{a}$ and $\vec{z}$. Then extend $S$ such that it is the lowest power of two and then apply Eq.~\eqref{eq:inversion_fwht}. This process is essentially a relabeling of the original elements of $S$. Let $f$ be the relabeling function. Then from direct substitution we have that
\begin{align}
    z_{f(i)}=\sum_ja_{f(i\oplus j)}x_{f(j)}.
\end{align}
If $f(i\oplus j)=f(i)\oplus f(j)$, we have
\begin{align}
    z_{f(i)}=\sum_ja_{f(i)\oplus f(j)}x_{f(j)},
\end{align}
which is a XOR convolution over the relabeled indices.

While we cannot generally relabel to a XOR convolution, we can try to bin in such a way that a XOR convolution is possible on a subspace. The heuristic works by trying to construct a contiguous set of bitstrings on the full space. It orders the support in numerical order of bitstrings. Then it iteratively pads bitstrings between the smallest and largest bitstrings. When it is not possible to pad anymore, it inserts elements before and after. It does this until the size of the padded set of bitstrings is a power of two. Note that binning is generally only practical if the power of two is at most around low 20s.

\subsection{Error Bounds}
\label{sec:statstical_analysis}
In this section, we discuss how shot noise affects the corrected distribution and expectation values of observables. For convenience, we can interpret the scenario $\vec{z}=A\vec{x}$ from a measurement or readout error mitigation standpoint, where $A$ is a calibration matrix with the added structure that it is XOR-circulant. Then, the accuracy bounds derived in prior measurement error mitigation papers generally apply.

In measurement error mitigation, Ref.~\cite{Maciejewski_2020MitigOfReadoutNoiseInNearTermQuantDevices} models the measurement apparatus as
\begin{align}
    \textbf{M}^{\text{exp}}=\Lambda\textbf{M}^{\text{ideal}}+\mathbf{\Delta},
\end{align}
where \textbf{M} is block vector of the POVM operators. Thus, $\Lambda$ models the classical errors, $\mathbf{\Delta}$ captures the remaining (non-classical) errors, $\textbf{M}^\text{ideal}$ is the ideal POVM, and $\textbf{M}^\text{exp}$ is the POVM experienced by the payload circuit when executed on quantum hardware. Although $A$ in DEM accounts for more than just measurement errors, it fits mathematically into this POVM model. $A$ is $\Lambda$, $\mathbf{\Delta}$ is due to modeling error, and $\vec{z}=\text{tr}(\tilde\rho \textbf{M}^\text{exp})$ (recall that $\tilde \rho$ is the ideal output of the payload circuit).

In general, there are four main sources of errors: sampling of $\vec{a}$, sampling of $\vec{z}$, modeling, and converting the quasi-probability solution to a valid distribution. In this work, we ignore the modeling error, since tomography of $\vec{a}$ does not inform us of $M^\text{exp}$. First, we will define some important quantities \cite{Maciejewski_2020MitigOfReadoutNoiseInNearTermQuantDevices}. 
Let
\begin{align}
    \epsilon_y=\sqrt{\frac{\ln(2^{n+1}-2)-\ln(P_y)}{2M_y}},
\end{align}
where $n$ is the number of qubits, $M_y$ is the sample size, and  $\text{Pr}(D_{TV}(\hat{y},\vec{y})\leq \epsilon_y)\geq 1-P_y$ for random variable $\vec{y}$. 
Let
\begin{align}
    \norm{A}_{1\rightarrow 1}
\end{align}
denote the maximum $l_1$ norm of the columns of $A$.
Let
\begin{align}
    E_z=\norm{A^{-1}}_{1\rightarrow 1}\epsilon_{z}
\end{align}
and
\begin{align}
    E_C=D_{TV}(\hat{x}, \hat{A}^{-1}\hat{z}).
\end{align}
The first bound gives us the accuracy of DEM corrected distribution given finite sampling of $\vec{a}$ and $\vec{z}$.
\begin{theorem}[Accuracy Bound]\label{thm:accuracyBound}
    \begin{align}
        D_{TV}(\hat{x}, \vec{x})\leq E_C+E_A+E_z,
    \end{align}
    where $E_A$ is due to the sampling error of $\hat{A}$, $E_C$ is the error due to converting the quasiprobability solution to a real distribution, and  $E_z$ is due to the sampling error of $\hat{z}$.
\end{theorem}
\begin{proof}
    From the triangle inequality we have
    \begin{align}
         D_{TV}[\hat{x}, \vec{x}]\leq  D_{TV}(\hat{x}, \hat{A}^{-1}\hat{z})+ D_{TV}(\hat{A}^{-1}\hat{z}, A^{-1}\hat{z})\\
         \notag +D_{TV}(A^{-1}\hat{z}, A^{-1}\vec{z}).
    \end{align}

    Then Ref.~\cite{Maciejewski_2020MitigOfReadoutNoiseInNearTermQuantDevices} shows that $E_C=D_{TV}(\hat{x}, \hat{A}^{-1}\hat{z})$ and $E_z=D_{TV}(A^{-1}\hat{z}, A^{-1}\vec{z})$. Note that the term $E_C$ is directly the $D_{TV}$ since it is directly calculable \cite{Maciejewski_2020MitigOfReadoutNoiseInNearTermQuantDevices}. The only remaining term we have to determine is $D_{TV}(\hat{A}^{-1}\hat{z}, A^{-1}\hat{z})$. From direct calculation, we get
    \begin{align}
        &D_{TV}(\hat{A}^{-1}\hat{z}, A^{-1}\hat z)=\frac{1}{2}\norm{(\hat{A}^{-1}-A^{-1})\hat z}_1\\
        &\leq \frac{1}{2}\norm{\hat{A}^{-1}-A^{-1}}_{1\rightarrow 1}\norm{\hat{z}}_1=\frac{1}{2}\norm{\hat{A}^{-1}-A^{-1}}_{1\rightarrow 1}
    \end{align}
    Then the result follows by setting $E_A=\frac{1}{2}\norm{\hat{A}^{-1}-A^{-1}}_{1\rightarrow 1}.$
\end{proof}
Note that $\frac{1}{2}\norm{\hat{A}^{-1}-A^{-1}}_{1\rightarrow 1}$ can be bounded with further modeling assumptions, but we leave it explicit here. An expansion of the term shows that it scales with $\epsilon_a$.

From Theorem \ref{thm:accuracyBound}, we can write a DEM certificate (which if true we assume that error mitigation is helpful) under an exact XOR assumption. In measurement error mitigation, a useful bound is
\begin{align}
    D_{TV}(\hat{z}, \vec{x})< D_{OP}(M^\text{exp}, M^\text{ideal})
\end{align}
and then the certificate without modeling error is \cite{Maciejewski_2020MitigOfReadoutNoiseInNearTermQuantDevices}
\begin{align}
    E_C+E_A+E_z\leq D_{OP}(M^\text{exp}, M^\text{ideal})+\epsilon_z.
\end{align}
In DEM we use the certificate,
\begin{align}
    E_C+E_A+E_z\leq 1-\hat{a}(0)+\epsilon_a,
\end{align}
where $\hat{a}(0)$ is the first element of the vector.

Lastly, we can bound the error of a class of observables. 
\begin{proposition}[From Ref.~\cite{Bravyi_2021MitigMeasErrsIInMultiQExper}]
    
For bounded observable $O$ with $\abs{O(x)}\leq 1$, the error mitigated value has accuracy $\delta$ with probability $(\geq 2/3)$ for shots $M\geq 4\delta^{-2}\Gamma^2$. $\Gamma^2$ is the DEM overhead and $\Gamma=\norm {A^{-1}}_{1\rightarrow 1}$.
\end{proposition}
\subsection{Quantum Hardware Demonstrations}
The quantum hardware demonstrations were executed on ibm\_marrakesh with 200,000 shots each for the NEC and the payload circuit. We used the binning heuristic for all DEM implementations. The compression method also worked well, but was generally outperformed by binning. 
The corrected distributions are in general quasi-distributions since they can contain negative values. To correct to a near distribution, the method described in Ref.~\cite{Smolin_2012EfficMethForCompTheMaxLikelihoodQuantStFromMeasWithAddGaussNoise} is used. Results for 20-qubit and 30-qubit GHZ state preparation, $\ket{D^{10}_1}$ and $\ket{D^{20}_1}$ state preparation, 6-qubit and 10-qubit quantum phase estimation (QPE), and 5 qubit Grover search are presented. An overview of the fidelities and gate counts of the quantum hardware results are provided in Tables \ref{tab:fidelities} and \ref{tab:gate_counts}, respectively.

\begin{table}[]
    \centering
    \begin{tabular}{|c|c|c|c|c|c|c|c|}
        \hline
        & \multicolumn{7}{c|}{Quantum Hardware Fidelities}\\
        \hline
         & GHZ20 & GHZ30 & $\ket{D^{10}_1}$ & $\ket{D^{20}_1}$ & QPE6 & QPE10 & Grover5\\
         \hline
         DEM & 0.937 & 0.977 & 0.935 &0.803 &0.897 &0.326 &0.749\\
         \hline
         Raw & 0.488 & 0.232 & 0.576 & 0.283 &0.579 &0.029 &0.102\\
         \hline
    \end{tabular}
    \caption{Fidelities for circuits executed on quantum hardware. Raw corresponds to the output of the original circuit without correction. Cor. corresponds to the corrected distribution via DEM. The number after the name specifies the number of qubits used.}
    \label{tab:fidelities}
\end{table}

\begin{table}[]
    \centering
    \begin{tabular}{|c|c|c|c|c|c|c|c|}
        \hline
        &\multicolumn{7}{c|}{Payload Circuit Gate Counts}\\
        \hline
          Op. & GHZ20 & GHZ30 & $\ket{D^{10}_1}$ & $\ket{D^{20}_1}$ & QPE6 & QPE10 & Grover5\\
         \hline
          CZ & 19 & 29& 18& 38& 77& 267& 582\\
         \hline
          RZ & 59 & 89& 45& 95& 91& 259& 742\\
         \hline
          SX & 39 & 59& 50& 100& 158& 556& 1281\\
         \hline
          X & 0 & 0& 1& 6& 5& 9& 48\\
         \hline
         
    \end{tabular}
    \caption{Gate counts for the raw payload circuits executed on quantum hardware. The number after the name specifies the number of qubits used.}
    \label{tab:gate_counts}
\end{table}

\subsubsection{Clifford Circuits}
The following results are for GHZ state preparation circuits which are Clifford. Since the entire circuit is Clifford and Clifford gates map Paulis to Paulis, the composite noise channel affecting the entire circuit is close to Pauli. The corrected distribution for the 30-qubit GHZ quantum hardware execution achieves 97.7\% fidelity and the distributions are shown in Fig.~\ref{fig:ghz_30}.
\begin{figure}
    \centering
    \includegraphics[width=1\linewidth]{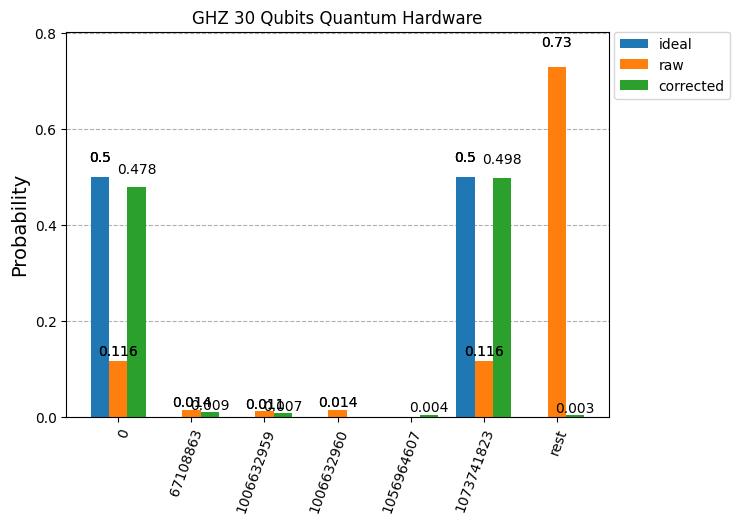}
    \caption{30-qubit GHZ state preparation on quantum hardware. The x-axis uses decimal representation of the basis states.}
    \label{fig:ghz_30}
\end{figure}
The result of the 20-qubit GHZ demonstration is shown in Fig.~\ref{fig:ghz_20}. The corrected distribution fidelity is 93.7\%. Interestingly, the corrected 30-qubit GHZ demonstration achieves a higher fidelity despite the larger number of gates and qubits.

\begin{figure}
    \centering
    \includegraphics[width=\linewidth]{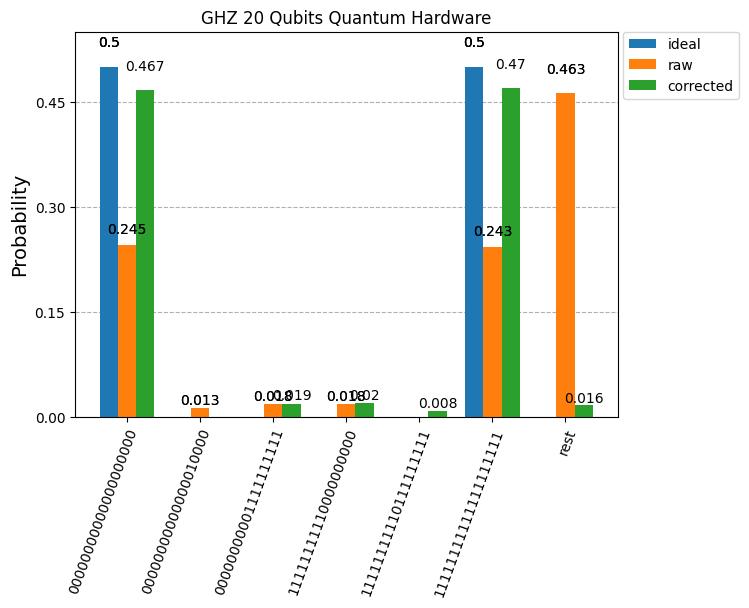}
    \caption{20-qubit GHZ state preparation circuit on quantum hardware.}
    \label{fig:ghz_20}
\end{figure}

\subsubsection{Non-Clifford Circuits}
We now turn to non-Clifford circuits. In contrast to Clifford circuits, the presence of non-Clifford gates can destroy the biasing towards Pauli since they can in general map Paulis to non-Paulis. Still, as the following results show, DEM is tolerant to some level of deviation from the Pauli bias. For Dicke states, the sparse state preparation method in Ref.~\cite{gonzales2024efficientsparsestatepreparation} was used to generate the preparation circuits. As shown in Fig.~\ref{fig:dicke_20_1}, the rest bar, which can be interpreted as an approximation of the infidelity, is significantly higher than the corrected rest bar.
\begin{figure}
    \centering
    \includegraphics[width=1\linewidth]{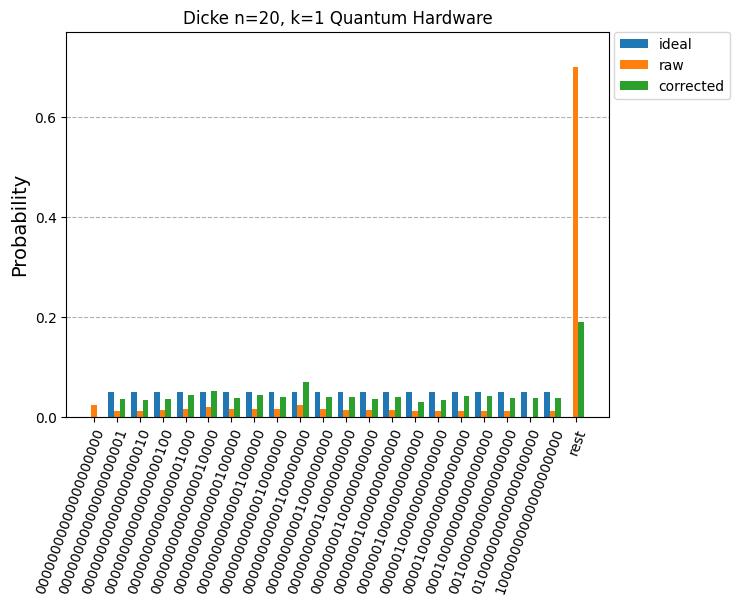}
    \caption{Dicke $\ket{D^{20}_1}$ state preparation on quantum hardware.}
    \label{fig:dicke_20_1}
\end{figure}
Figure \ref{fig:dicke_10_1} shows the results for $\ket{D^{10}_1}$.
\begin{figure}
    \centering
    \includegraphics[width=\linewidth]{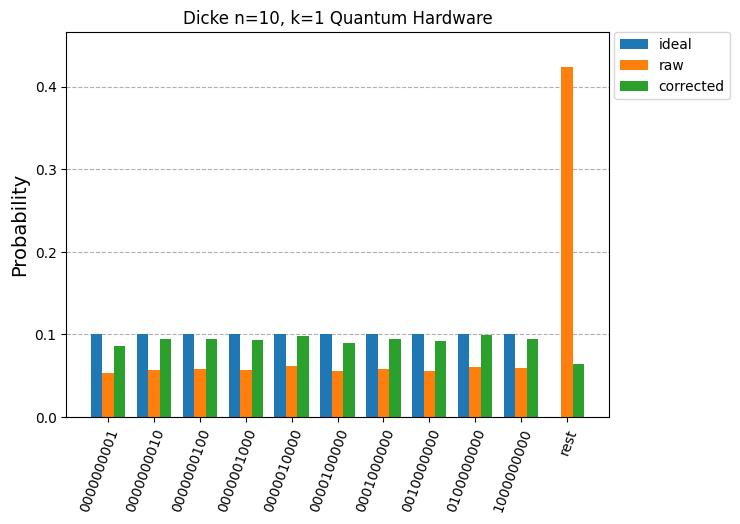}
    \caption{Dicke $\ket{D^{10}_1}$ state preparation on quantum hardware.}
    \label{fig:dicke_10_1}
\end{figure}

Next, the results for the quantum phase estimation demonstrations are shown in Figs.~\ref{fig:qpe6_hardware} and \ref{fig:qpe10_hardware}. The phase angles were chosen to be $\theta=\frac{31}{32}$ and $\theta=\frac{511}{512}$ to correspond with a peak value of all 1s for the 6 and 10 qubit demonstrations, respectively. From Table \ref{tab:fidelities}, QPE10 had the smallest corrected fidelity, but the highest relative increase in fidelity. 
\begin{figure*}
    \centering
    \subfloat[Quantum hardware.\label{fig:qpe6_hardware}]{\includegraphics[width=0.5\linewidth]{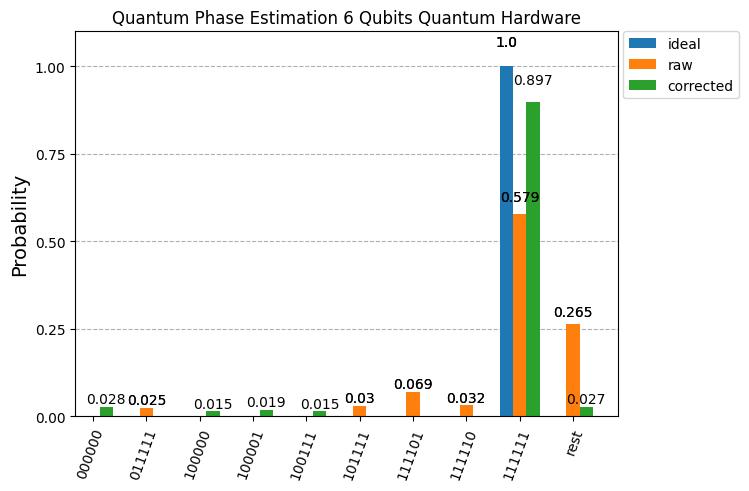}}
    \subfloat[Simulation. \label{fig:qpe6_sim}]{
    \includegraphics[width=0.5\linewidth]{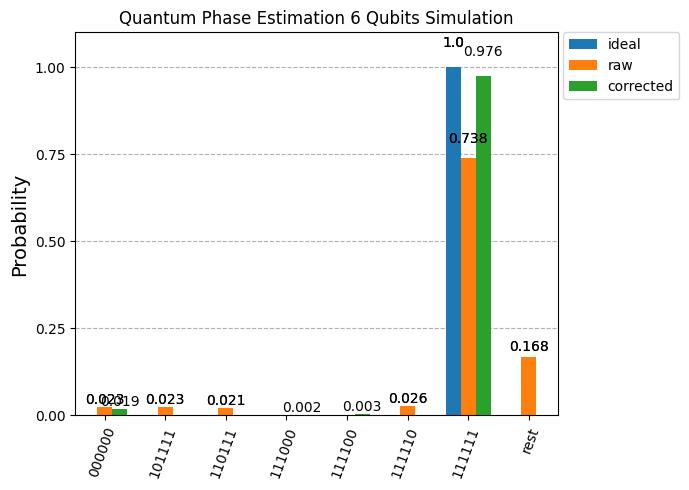}
    }
    \caption{The phase angle used is $\theta=\frac{31}{32}$ which corresponds to an ideal output of all 1s. Note that the control qubit is included in the correction. There is a clear peak at the correct value for both quantum hardware and simulation.}
    \label{fig:QPE_6}
\end{figure*}
\begin{figure*}
    \centering
    \subfloat[Quantum hardware. \label{fig:qpe10_hardware}]{\includegraphics[width=0.5\linewidth]{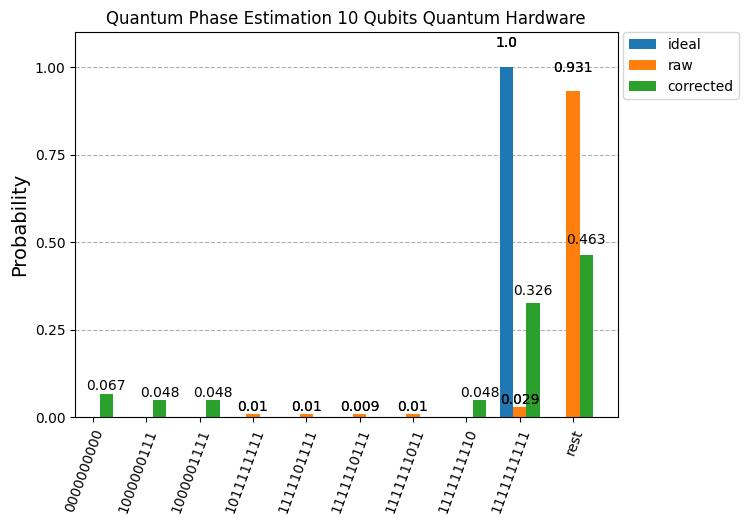}}
    \subfloat[Simulation. \label{fig:qpe10_sim}]{
    \includegraphics[width=0.5\linewidth]{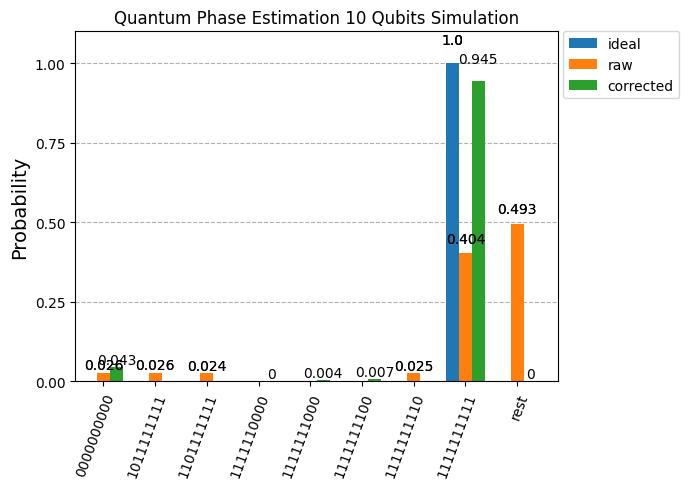}
    }
    \caption{The phase angle used is $\theta=\frac{511}{512}$ which corresponds to an ideal output of all 1s. Note that the control qubit is included in the correction. There is a clear peak at the correct value for both quantum hardware and simulation.}
    \label{fig:QPE_10}
\end{figure*}
Finally, the Grover search demonstration on 5 qubits were conducted with a target state of all 1s. The optimal number of iterations were used. The results are shown Figs.~\ref{fig:grover5_hardware}.
\begin{figure*}
    \centering
    \subfloat[Quantum hardware. \label{fig:grover5_hardware}]{\includegraphics[width=0.5\linewidth]{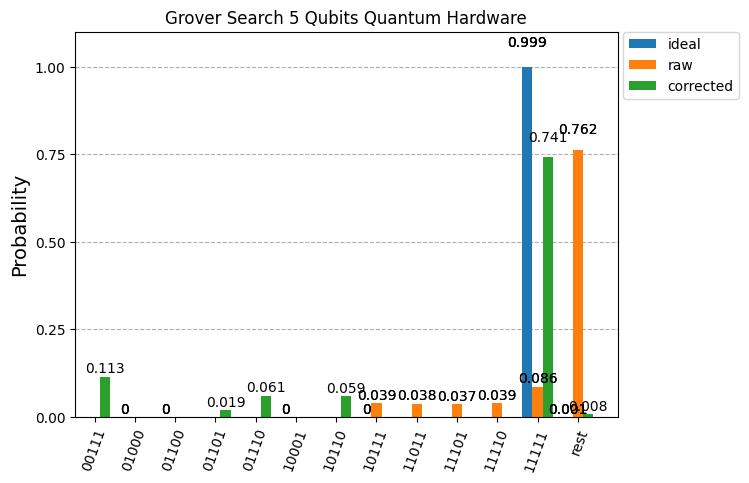}}
    \subfloat[Simulation. \label{fig:grover5_sim}]{
    \includegraphics[width=0.5\linewidth]{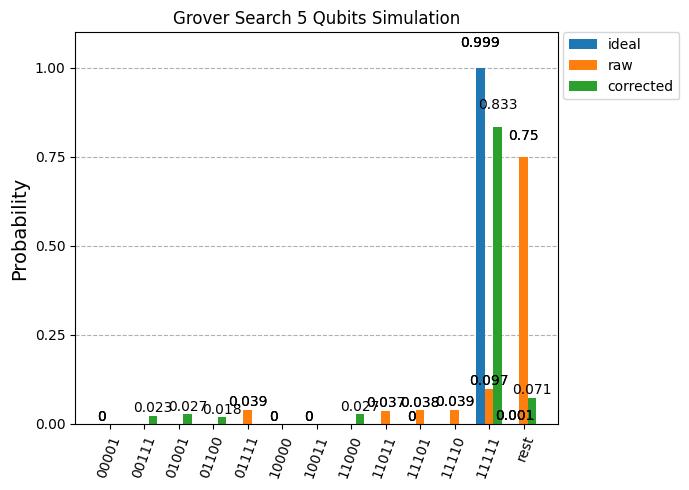}
    }
    \caption{Grover Search. The ideal state is 11111.}
    \label{fig:grover_5}
\end{figure*}

\subsection{Simulations}
Unless stated otherwise, the simulations used 0.001 and 0.01 single qubit and two qubit depolarization rates, respectively. The simulations also assume full connectivity. The QPE results are given in Figs.~\ref{fig:qpe6_sim} and \ref{fig:qpe10_sim}. The Grover search simulation is shown in \ref{fig:grover5_sim}. The same logical circuits were used as their quantum hardware executed counterparts. As expected the classical simulations performed better and in some cases performed much better than the corresponding quantum hardware demonstrations (see Figs.~\ref{fig:qpe10_hardware} and~\ref{fig:qpe10_sim}).

\section{Open Problems and Future Directions}\label{sec:limsAndFutureDir}
There are two key criteria for executing DEM effectively: (1) Pauli bias for the \textit{composite} noise channel (see Eq.~\eqref{eq:pauli-bias}) and (2) accurate characterization of $\vec{a}$. Still, the quantum hardware results show that the approach is resilient to some level of deviation from these criteria. 

For Clifford circuits, (1) is easy to achieve. For non-Clifford circuits, the composite error channel will in general be non-Pauli due to commutation of the Pauli noise channels through non-Clifford gates. A possible solution is magic state injection \cite{Zhou_2000MethodForQuantLogicGateConstruct}. Aside from the the magic state preparation subcircuit, the dynamic circuit is Clifford. However, the behavior of the composite noise channel is unclear, since we are now dealing with an ensemble of circuits.

For criteria (2), with the vanilla NEC method, differences in the noise channels can occur. The improved NEC characterization described in Sec.~\ref{sec:improved_characterization} will probably help in solving this problem. It is also likely that we should use the same twirls for both the payload circuit and NEC.

Lastly, combining DEM with QECC or other techniques such as noise aware qubit mapping \cite{langfitt2024dynamicresourceallocationquantumErrDetect} are interesting open questions. Note that DEM can probably be applied at the logical level for QECC protected qubits. It is likely that the two approaches together will reduce overheads and enhance their effectiveness.

\section{Conclusions}
This work introduces DEM for correcting the output distribution of a quantum computation. We provide a rigorous theoretical foundation. We prove that a composite Pauli error channel generates an assignment matrix that is described by a XOR convolution between the noise vector and the ideal distribution. Thus, we can correct the noisy output distribution via a FWHT. We introduced a novel noise vector tomography approach that requires only one logical circuit. Additionally, we derived accuracy bounds. The provided DEM implementation biases errors with twirling and requires execution of only 2 logical circuits. The results on quantum hardware show massive fidelity improvements for GHZ and Dicke state preparation, quantum phase estimation, and Grover Search.

The error mitigation techniques of general error mitigation (GEM) \cite{Jattana_2020_generalErrorMitigForQuantCircs, dobler2024scalablegeneralerrormitigation} and measurement error mitigation (MEM) \cite{Maciejewski_2020MitigOfReadoutNoiseInNearTermQuantDevices, Nation_2021ScalableMitigOfMeasErrsOnQuantComp, Bo_2022EffQuantReadErrMitigForSparseMOONQD} use inversion of an assignment matrix. DEM significantly differs from these methods. In MEM, the assignment matrix is only characterized for measurement errors. In GEM, the method explicitly avoids noise biasing. In both of these methods, a circulant structure is not known and the columns of the assignment matrix are independently characterized. Thus, a Transform is not used for inversion. Note that MEM theory corrects exactly for measurement Markovian errors \cite{Geller_2020_rigorousMeasErrCorrec}.

\section{Data Availability}
The data presented in this paper is available online at \url{https://github.com/alvinquantum/quantum_distribution_error_mitigation}.

\section{Acknowledgements}
I thank Daniel Dilley and Zain H. Saleem from Argonne National Laboratory for useful discussions. This material is based upon work supported by Laboratory Directed Research and Development (LDRD) funding from Argonne National Laboratory, provided by the Director, Office of Science, of the U.S. Department of Energy under Contract No. DE-AC02-06CH11357. This research used resources of the Oak Ridge Leadership Computing Facility, which is a DOE Office of Science User Facility supported under Contract DE-AC05-00OR22725.

\appendix

\section{Proofs}

\subsection{Proof of Prop.~\ref{prop:stochastic_relation}}
\begin{proof}
First, we can write the Pauli channel as
\begin{align}
    \sum_i\chi_iP_i\tilde\rho P_i=\sum_{\vec{x}, \vec{z}\in\{0,1\}^n}\chi(\vec{x},\vec{z})X^{\vec{x}}Z^{\vec{z}}\tilde\rho Z^{\vec{z}}X^{\vec{x}},
\end{align}
where if $x_i=1$, $X$ acts on qubit $i$ . If $x_i=0$, $I$ acts on qubit $i$. The notation works likewise for $Z$. Since we are focused on the distribution after measuring in the standard basis, phase errors act like identity. In the standard basis, let the ideal distribution be 
\begin{align}
    a_l=\bra{l}\tilde\rho\ket{l}
\end{align}
and the noisy distribution
\begin{align}
    b_k=\text{tr}(\op{k}\mathcal{E}(\tilde\rho)).
\end{align}
Then,
\begin{align}
    b_k=\sum_{\vec{x}, \vec{z}\in\{0,1\}^n}\chi(\vec{x},\vec{z})\bra{k}X^{\vec{x}}Z^{\vec{z}}\tilde\rho Z^{\vec{z}}X^{\vec{x}}\ket{k}\\
    =\sum_{\vec{x}, \vec{z}\in\{0,1\}^n}\chi(\vec{x},\vec{z})\bra{k\oplus \vec{x}}\tilde\rho \ket{k\oplus \vec{x}}.
\end{align}
Substituting $l=k\oplus \vec{x}$, we get
\begin{align}
    b_k=\sum_{l, \vec{z}\in\{0,1\}^n}\chi(l\oplus k,\vec{z})a_l=\sum_lA_{k,l}a_l,
\end{align}
where $A_{k,l}=\sum_{\vec{z}}\chi(l\oplus k, \vec{z})$.
\end{proof}

\subsection{Proof of Prop.~\ref{prop:AtildeSymmetry}}
\begin{proof}    
Since Pauli channels commute, we have $\mathcal{E}(P_x\rho P_x)=P_x\mathcal{E}(\rho)P_x$, $\forall P_x$, where $P_x$ is a Pauli X string (e.g., $X\otimes I$).  Let $\rho=\op{k}$. Then the probabilities of measuring the noisy state are
\begin{align}
    \notag&\text{tr}(\op{j}\mathcal{E}(P_x\op{k} P_x))\\
    &=\bra{j}\mathcal{E}(P_x\op{k} P_x)\ket{j}=\bra{j}P_x\mathcal{E}(\op{k})P_x\ket{j}, \forall P_x.
\end{align}
Let $\ket{j'}=P_x\ket{j}$ and $\ket{k'}=P_x\ket{k}$. Note that $A_{j,k}=\text{tr}(\op{j}\mathcal{E}(\op{k})).$ Thus, $A_{j,k'}=A_{j',k}$. 
Since this holds for all standard basis states $\ket{j}$, $\ket{k}$, and $P_x$,
\begin{align}
    \bra{j}A P_x\ket{k}=\bra{j}P_xA\ket{k} \quad \forall \ket{j},\ket{k},P_x.
\end{align}
\end{proof}


\subsection{Improved Characterization}\label{sec:improved_characterization} 
In vanilla NEC we replace SX with X. SX maps Y to Z and thus the error channel of the NEC circuit can  significantly deviate from the payload circuit. However, notice that if the SX gates appear at the beginning of the circuit, this scenario does not occur, since there are no gates before them. Ideally, we want to achieve a payload circuit structure of \textbf{SN}, where \textbf{S} consists only of SX gates and \textbf{N} does not contain SX. 

In general, we want the gate(s) being replaced for the construction of the NEC circuit to appear in the beginning. For Clifford payload circuits this structure is always possible in the \{H, CX, CZ, P\} basis, since a Clifford operation admits a \textbf{F$_1$HF$_2$} decomposition, where \textbf{F$_i$} cannot create superposition
\cite{Maslov_2018ShorterStabCircsViaBruhatDecompAndQCTransform, Bravyi_2021HadamardFreeCircsExposeTheStructOfTheCliffGroup}. Since we are evolving from the ground state, \textbf{F$_1$} only introduces a global phase and can be replaced with identity.  Since $\textbf{H}$ consists only of H gates, we satisfy the structure.  The NEC is constructed from the payload structure by replacing H, which is the only gate that creates superposition. We leave implementation of the improved characterization for future work.

\vfill

\small

\noindent\framebox{\parbox{0.95\linewidth}{
The submitted manuscript has been created by UChicago Argonne, LLC, Operator of 
Argonne National Laboratory (``Argonne''). Argonne, a U.S.\ Department of 
Energy Office of Science laboratory, is operated under Contract No.\ 
DE-AC02-06CH11357. 
The U.S.\ Government retains for itself, and others acting on its behalf, a 
paid-up nonexclusive, irrevocable worldwide license in said article to 
reproduce, prepare derivative works, distribute copies to the public, and 
perform publicly and display publicly, by or on behalf of the Government.  The 
Department of Energy will provide public access to these results of federally 
sponsored research in accordance with the DOE Public Access Plan. 
http://energy.gov/downloads/doe-public-access-plan.}}\\


\end{document}